\def\ps@headings{%
\def\@oddhead{\mbox{}\scriptsize\rightmark \hfil \thepage}%
\def\@evenhead{\scriptsize\thepage \hfil \leftmark\mbox{}}%
\def\@oddfoot{}%
\def\@evenfoot{}}
\newcommand{\fm}[1]{\scriptsize\mbox{\ensuremath{#1}}}
\newcommand{\ft}[1]{\scriptsize #1}
\newtheorem{theorem}{Theorem}
\newtheorem{lemma}{Lemma}
\newtheorem{corollary}{Corollary}
\newcommand{\ie}{{\it i.e.},\ }
\newcommand{\cf}{{\it cf.} }
\newcommand{\EE}{\ensuremath{\mathbb{E}}}
\newcommand{\RRP}{\ensuremath{\mathbb{R}_+}}
\newcommand{\stleq}{\ensuremath{\leq_{\mathrm{st}}}}
\newcommand{\FF}{\ensuremath{\mathbb{F}}}
\newcommand{\II}{\ensuremath{\mathbf{I}}}
\newcommand{\ZZ}{\ensuremath{\mathbf{Z}}}
\newcommand{\dmax}{\ensuremath{\delta_{\mathrm{max}}}}
\newcommand{\Gdeff}{\ensuremath{G_0}}
\newcommand{\Gmin}{\ensuremath{G_{\mathrm{min}}}}
\newcommand{\Gmax}{\ensuremath{G_{\mathrm{max}}}}
\newcommand{\dd}{\ensuremath{d}}
\begin{document}

\title{Coding for Caches in the Plane}

\author{
\IEEEauthorblockN{
Eitan Altman\IEEEauthorrefmark{1},
Konstantin Avrachenkov\IEEEauthorrefmark{1} and
Jasper Goseling\IEEEauthorrefmark{2}\IEEEauthorrefmark{3}
}
\IEEEauthorblockA{
 \IEEEauthorrefmark{1}
INRIA Sophia Antipolis, France\\
}
\IEEEauthorblockA{
 \IEEEauthorrefmark{2}
 University of Twente, The Netherlands\\
}
\IEEEauthorblockA{
 \IEEEauthorrefmark{3}
 Delft University of Technology, The Netherlands\\
 eitan.altman@sophia.inria.fr, k.avrachenkov@sophia.inria.fr, j.goseling@utwente.nl
}
}

\maketitle

%
%
%
\begin{abstract}
We consider wireless caches located in the plane according to general point process and specialize the results for the homogeneous Poisson process. A large data file is stored at the caches, which have limited storage capabilities. Hence, they can only store parts of the data. Clients can contact the caches to retrieve the data.  We compare the expected cost of obtaining the complete data under uncoded as well as coded data allocation strategies. It is shown that for the general class of cost measures where the cost of retrieving data is increasing with the distance between client and caches, coded allocation outperforms uncoded allocation. The improvement offered by coding is quantified for two more specific classes of performance measures. Finally, our results are validated by computing the costs of the allocation strategies for the case that caches coincide with currently deployed mobile base stations.
\end{abstract}



%
%
%
\section{Introduction}
Consider base stations located in the plane that serve as caches of a big piece of data. Clients are interested in obtaining the complete data. The cost of obtaining data from a particular cache is increasing with the distance between the client and that cache. This cost can reflect, for instance, the time required to obtain the data, the transmit power that needs to be used at the cache to satisfy certain QoS constraints, or the probability with which the client fails in obtaining the data.

\begin{figure}
%
%
%

\centering
\begin{tikzpicture}
\tikzstyle{boxa}=[rectangle,draw,minimum width=2mm,minimum height=2mm]
\tikzstyle{boxb}=[rounded corners,draw,minimum width=20mm,minimum height=4mm]
\tikzstyle{->}=[-latex]

\path (0,0) -- (0,3);

\node[boxa] (client) at (0,0) {\ft{client}};
\node[boxb] (closeb) at (-.8,1.5) {\fm{x_2}};
\node[boxb] at (2.3,1.7) {\fm{x_1}};
\node[boxb] (closea) at (1.8,-0) {\fm{x_1}};
\node[boxb] at (-1.3,-1) {\fm{x_1}};
\node[boxb] at (3,-1.5) {\fm{x_2}};

\draw[latex-] (client) -- (closeb);
\draw[latex-] (client) -- (closea);
\end{tikzpicture}
\caption{Uncoded data allocation}
\label{fig:intropart}

\centering
\begin{tikzpicture}
\tikzstyle{boxa}=[rectangle,draw,minimum width=2mm,minimum height=2mm]
\tikzstyle{boxb}=[rounded corners,draw,minimum width=20mm,minimum height=4mm]
\tikzstyle{->}=[-latex]

\path (0,0) -- (0,3);

\node[boxa] (client) at (0,0) {\ft{client}};
\node[boxb] at (-.8,1.5) {\fm{\alpha_3x_1+\beta_3x_2}};
\node[boxb] at (2.3,1.7) {\fm{\alpha_5x_1+\beta_5x_2}};
\node[boxb] (closex) at (1.8,-0) {\fm{\alpha_1x_1+\beta_1x_2}};
\node[boxb] (closey) at (-1.3,-1) {\fm{\alpha_2x_1+\beta_2x_2}};
\node[boxb] at (3,-1.5) {\fm{\alpha_4x_1+\beta_4x_2}};

\draw[latex-] (client) -- (closex);
\draw[latex-] (client) -- (closey);
\end{tikzpicture}
\caption{Coded data allocation}
\label{fig:introcoded}
\end{figure}

The capacity of the caches is limited, hence it is not possible to store the complete data in each cache. In this paper we study \emph{allocation strategies} for the caches, \ie strategies that decide what data to store in which cache. In particular, we will study two strategies that we will introduce next by means of an example. Suppose that the data consists of two packets, $x_1$ and $x_2$, of equal size. Assume, moreover, that the capacity of a cache equals the size of a single packet.

One particular strategy that we will analyze is to store in each cache either $x_1$ or $x_2$. More precisely, for each cache we choose at random, independent of the other caches, to store $x_1$ (with probability $1/2$) or $x_2$ (with probability $1/2$). Now, in order to obtain all data, a client would need to contact two caches. In fact, in order to minize cost it will request $x_1$ from the nearest cache that has $x_1$ and $x_2$ from the nearest cache that has $x_2$. We will refer to this strategy, illustrated in  Figure~\ref{fig:intropart}, as the uncoded strategy.

The second strategy that we consider is based on caching coded data. In particular, we store at each cache a random linear combination of $x_1$ and $x_2$. Now, with high probability, the linear combinations of any set of two caches will form a full rank system. Therefore, the client can contact the two closest caches and obtain sufficient information to retrieve the original data. This is illustrated in Figure~\ref{fig:introcoded}. It is suggested by Figures~\ref{fig:intropart} and~\ref{fig:introcoded} that the coded strategy outperforms the partitioning strategy in the sense that the distances between client and contacted caches are reduced. One of the contributions of the current work is to proof this result rigorously for a broad class of performance measures.

In addition, we will be interested in two more specific performance measures. The first is the cache hit rate, \ie the probability that the client can retrieve the data from the caches. The second performance measure is the expected total cost of retrieving the data, where total cost is the sum of costs of obtaining individual fragments. The cost of obtaining a fragment is increasing in the distance between client and cache.

Applications of linear network coding for distributed storage were studied in~\cite{dimakis2010network}, where it has been demonstrated that repair bandwidth can be significantly reduced.
In the current work there is no notion of repair. Instead, coding is used to bring the data `closer' to the client.
The use of  coding was also explored in~\cite{dimakis2005ubiquitous} where it was shown how to efficiently allocate the data at  caches with the aim of ensuring that any sufficiently large subset of caches can provide the complete data.
The difference with the current work is that we are taking the geometry of the deployment of the caches into account.
In~\cite{maddah2012fundamental} coding strategies for networks of caches are presented, where each user has access to a single cache and a direct link to the source.
It is demonstrated how coding helps to reduce the load on the link between the caches and the source.
Note that we assume that different transmissions from caches to the clients are orthogonal, for instance by separating them in time or frequency.
In~\cite{6225433} the impact of non-orthogonal transmissions is considered and scaling results are derived on the best achievable transmission rates.
Systems of caches can be classified according to the amount of coordination between the caches. In~\cite{rosensweig09infocom} an approach with implicit coordination is proposed. Networks of caches are notoriously difficult to analyze. Only some very particular topologies
and caching strategies (see \cite{Fofack:2012} and references therein) or approximations \cite{Rosensweig2010},\cite{Che:2006} have been studied.
In a recent work \cite{Rosensweig2013} ergodicity of cache networks has been investigated. Using continuous
geometrical constraints on cache placement instead of combinatorial constraints allows us to obtain exact analytical results.

Other work on caching in wireless networks is, for instance, \cite{nuggehalli2003energy, jin2005content, yin2006supporting}. In \cite{nuggehalli2003energy} the authors analyze the trade-off between energy consumption and the retrieval delay of data from the caches.
In \cite{jin2005content}, the authors consider the optimal number of replicas of data such that the distance between a requesting node and the nearest replica is minimized.
Data sharing among multiple caches  such that the bandwidth consumption and the data retrieval delay are minimal is considered in  \cite{yin2006supporting}. None of~\cite{nuggehalli2003energy, jin2005content, yin2006supporting} are considering coded caching strategies.

We would like to emphasize that none of the above mentioned works considered continuous geometric constraints on cache placement.
Thus, to the best of our knowledge, the current paper is the first work on the analysis of spatial caching with stochastic geometry.

We would also like to note that the results of the present work can be applied to Information Centric Networking (ICN). ICN
is a new paradigm for the network architecture where the data is addressed by its name or content directly rather than by its physical location.
Examples of the ICN architecture are CCN/NDN \cite{Jacobson:2009}, DONA \cite{Koponen:2007} and TRIAD \cite{Gritter:2001}. Our results can be useful for the design of the wireless networks with the ICN architecture.



The remainder of this paper is organized as follows. In Section~\ref{sec:model} we define the model and formulate and exact problem statement.
Our main theoretical results are given in Section~\ref{sec:results}. In Section~\ref{sec:numerics} we provide numerical examples. In particular,
we consider network topologies generated by spatial Poisson process and from a real wireless network. We conclude and give future research
directions in Section~\ref{sec:discussion}.

%
%
%
\section{Model and problem statement} \label{sec:model}

\subsection{System}
We consider caches that are positioned in the plane and enumerate the caches with respect to their distance
to the client starting with cache closest to the client. The indices of the caches are denoted as $I_1,I_2,\dots$ and  we denote by $D(I)$ the distance between the client and the $I$-th cache. For a vector $\II=(I_1,I_2,\dots)$, let $D(\II)=(D(I_1),D(I_2),\dots)$. In the remainder we consider two scenarios: i) the caches are placed at arbitrary positions and we derive results that hold for any such placement, ii) caches are positioned according to a homogeneous Poisson process (HPP) with intensity $\lambda$. In the second scenario $D(I)$ is a random variable and we will derive results that hold in expectation.


The data $M$ consists of $k$ symbols, where each symbol is selected uniformly at random from~$\{1,\dots,q\}$, \ie
\begin{equation}
 M = \left(M_1,\dots,M_k \right).
\end{equation}
 A particular instance of such a $q$-ary symbol is that of a tuple of many bits. Each cache has the capacity to store one $q$-ary symbol.
There are various approaches to placing the message in the cache. Analyzing these placement strategies is the main problem studied in this paper. We introduce the strategies in Section~\ref{sec:strategies}.

A client, located at an arbitrary location in the plane, needs to retrieve $M$ by combining information from various caches. We assume that the client has full knowledge of the locations of the caches as well as the placement strategy that has been used. 
Clients can request data from caches. The cost of obtaining the data from caches is increasing with the distance between the client and these caches. A client will have to contact several caches to obtain the complete message. 

Our interest is in the expected cost, where the expectation is over the randomness in the placement strategy as well as over the location of the caches in case of placement according to a Poisson process.

%

%
%
%
\subsection{Strategies} \label{sec:strategies}

We will consider two strategies.

\subsubsection{Uncoded}
In the first strategy that we consider we simply store one of the symbols $M_1,\dots,M_k$ in a cache.
This is done independently and at random for each cache by choosing for each cache a label uniformly at random from $\{1,\dots,k\}$.
 In a cache with label $t$ we store $M_t$.

In the delivery phase the client contacts for each $t\in\{1,\dots,k\}$ the closest cache with that label and retrieves $M_t$. We denote by $I^p_1<I^p_2< \dots< I^p_k$ the index of the cache from which the parts are obtained, \ie if $I^p_i=j$, then the $i$-th part is obtained from the $j$-th nearest cache. Let $\II^p=(I_1^p,\dots,I_k^p)$.

\subsubsection{Random linear coding}
The second strategy is based on random linear coding. We interpret $M_1,\dots,M_k$ as symbols over the field $\FF_{q}$. For each cache we draw coefficients $c_i$, $i=1,\dots,k$, independently and uniformly at random from $\FF_{q}$. In the cache we store
\begin{equation}
\sum_{i=1}^k c_i M_i.
\end{equation}
In the delivery phase we retrieve from the set of $k$ nearest caches that provide complete data. In particular, the first part is obtained from the first neighbor. If the second neighbor has a linear combination that is linearly independent from the part of the first neighbor it is retrieved. Otherwise, we check the content of the third neighbor, etc. Let $I^c_1<I^c_2< \dots< I^c_k$ denote indices of the neighbors from which the parts are obtained. Let $\II^c=(I_1^c,\dots,I_k^c)$.


%
%
%
\subsection{Performance measures} \label{sec:perfmeasures}

We have defined two caching strategies. In both strategies the client retrieves the data from exactly $k$ caches. We express performance of the system in terms of the expected cost of contacting these caches. The cost is dependent on the distance from which the data is obtained. Let $\delta_1,\dots,\delta_k$ denote these distances. Using the notation from Section~\ref{sec:strategies} we have $\delta_i=D(I^p_i)$ and $\delta_i=D(I^c_i)$ in the uncoded and coded strategy respectively.

In the remainder we will present results for a general cost measure $G:\RRP^k\to\RRP$, where we used $\RRP=[0,\infty)$. In addition we will give results for a two classes of more specific cost measures. These classes will be introduced in the subsequent subsections. First we give the constraints that we impose on all the cost measures that we consider. We assume that $G$ is an increasing function in the sense that $\delta_i\geq \tilde\delta_i$ for all $i=1,\dots,k$ implies that $G(\delta_1,\dots,\delta_k)\geq G(\tilde\delta_1,\dots,\tilde\delta_k)$. In addition $G$ is assumed to be bounded by $\Gmax$, \ie $0\leq G(\delta_1,\dots,\delta_k)\leq \Gmax$ for all $(\delta_1,\dots,\delta_k)\in\RRP^k$. The rationale of the upper bound is that if the caches that provide complete data are located too far away a client can obtain the data directly from the source
using some other technology.



\subsubsection{Energy consumption, waiting time, etc.} \label{ssec:w}
We introduce the cost function
\begin{equation} \label{eq:wcost}
W(\delta_1,\dots,\delta_k) = \sum_{i=1}^k \min\{\left(\delta_i\right)^a, \left(\dmax\right)^a\}, 
\end{equation}
where $a>0$ is a parameter.


This cost function can be related to various quantities of interest. For instance, it can denote the energy consumed by a cache in order to successfully deliver data to a client. Based on
\begin{equation}
 R = \frac{1}{2}\log\left(1+P\delta^{-a}\right),
\end{equation}
where $a$ is the path loss exponent of the wireless medium, we see that the transmit power $P$ required to transmit at a guaranteed minimum rate $R$ satisfies $P=(\exp(2R)-1)\delta^a$. Therefore, the energy required to deliver a symbol from the cache is of the form~\eqref{eq:wcost}, with preconstant proportional to $\exp(2R)-1$. In this case $a$ denotes the path loss exponent, which in the plane satisfies $a\geq 2$.

Another example is given by the time required to obtain the data if TCP is used. The throughput of TCP is known to be inversely proportional to the round trip delay, \cf\cite{padhye1998modeling} or \cite{Altman:2005}. The later is given by the sum of the processing/queueing delay (which we assume zero) and the propogation delay, which is proportional to the distance $\delta$. Therefore, $W=C\delta$, \ie $a=1$, gives the time required to obtain data from a cache using TCP.


\subsubsection{Cache miss probability}
We consider the case that clients can only connect to caches within range $r$. If the complete data cannot be retrieved from the caches within this range, a cache miss occurs.
 If clients can connect to any cache within range $r$, but not outside, then $\EE[F(D(\II)]$, with
\begin{equation}
F(\delta_1,\dots,\delta_k) =
\begin{cases}
0,\quad &\text{if } \max_i\delta_i\leq r,\\
1,\quad &\text{otherwise,}
\end{cases}
\end{equation}
denotes the probability that data can not be obtained successfully. 

%
%
%
\section{Results} \label{sec:results}
\subsection{General cost measures}

Observe that the random linear coding strategy depends on the client to find a sufficient number of linearly independent combinations in caches that are not too far away. If $q$ is small there is a significant probability that the information of the next cache is not linearly independent. Hence, one might argue that for small values of $q$ there exists performance measures for which the partitioning strategy outperforms random linear coding. Our first result demonstrates that this is not true, \ie that random linear coding outperforms the partitioning strategy for any value of $q$ and any cost function.
\begin{theorem}  \label{th:codingbetter}
Consider an arbitrary placement of caches, a bounded increasing function $G$ and any $q$. Then
\begin{equation*}
\EE\left[ G\left(D(\II^c)\right) \right] \leq \EE\left[ G\left(D(\II^p)\right) \right],
\end{equation*}
\ie coding always outperforms partitioning.
\end{theorem}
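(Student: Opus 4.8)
The plan is to prove the stronger, pathwise statement that the two strategies can be realised on a common probability space so that $I^c_j \le I^p_j$ almost surely for every $j=1,\dots,k$. Granting this, the theorem follows by monotonicity in two steps: since the caches are indexed by increasing distance, $D$ is nondecreasing in its argument, so $D(I^c_j)\le D(I^p_j)$ for all $j$; and since $G$ is increasing, $G(D(\II^c))\le G(D(\II^p))$ holds pathwise. Taking expectations then yields $\EE[G(D(\II^c))]\le \EE[G(D(\II^p))]$. Note that because the placement is arbitrary but fixed, the distances $D(1)\le D(2)\le\cdots$ are deterministic and the only randomness is in the labels and in the coding coefficients.

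To build the coupling I would track, for each strategy, the progress made while scanning caches in order of increasing distance. Let $L_n$ be the number of distinct labels seen among the first $n$ caches (uncoded), and let $R_n$ be the rank over $\FF_q$ of the coefficient vectors of the first $n$ caches (coded). Both sequences are nondecreasing, increase by at most one per step, and are Markov chains on $\{0,\dots,k\}$: given $L_{n-1}=j$ a new label appears with probability $(k-j)/k$, while given $R_{n-1}=j$ the next vector leaves the current $j$-dimensional span with probability $1-q^{j-k}$. The indices of interest are the level-crossing times $I^c_j=\inf\{n:R_n\ge j\}$ and $I^p_j=\inf\{n:L_n\ge j\}$, so it suffices to couple the two chains so that $R_n\ge L_n$ for all $n$; then the hitting time of level $j$ is no later for $R$ than for $L$.

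The heart of the matter is the comparison of the per-step advance probabilities. For $j\ge 1$ I claim $1-q^{j-k}\ge (k-j)/k$, which rearranges to $j\,(q^{k-j}-1)\ge k-j$; since $q\ge 2$ gives $q^{k-j}-1\ge 2^{k-j}-1\ge k-j$, and $j\ge 1$, the claim follows. With this in hand the coupling is the standard monotone one: drive both chains by a single sequence of uniforms $U_1,U_2,\dots$, advancing the coded (resp.\ uncoded) chain at step $n$ whenever $U_n$ lies below the advance probability evaluated at its current state. Whenever the two chains occupy the same state $j\ge1$, the dominance of advance probabilities guarantees that any step taken by the uncoded chain is also taken by the coded chain; and when the coded chain is strictly ahead, a single unit step cannot break the ordering. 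Hence $R_n\ge L_n$ is preserved for all $n$.

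The one delicate point, and the place I expect the argument to need the most care, is the boundary state $j=0$. There the inequality above fails: the uncoded chain always advances at the first cache (it carries some label), whereas a random coefficient vector is the zero vector with probability $q^{-k}$, so $1-q^{-k}<1$. This is precisely why the model stipulates that the first part is obtained from the nearest cache, forcing $I^c_1=I^p_1=1$ and hence $R_1=L_1=1$; starting the coupling from this common state at $n=1$ sidesteps the only problematic transition, and the dominance for $j\ge1$ takes over thereafter. It is worth emphasising that without this convention the first coordinate would genuinely favour partitioning, and the inequality could fail for the increasing cost $G(\delta_1,\dots,\delta_k)=\delta_1$; so the convention is essential rather than cosmetic.
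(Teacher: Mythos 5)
Your proof is correct and establishes the same key fact as the paper---stochastic dominance of the index vectors, $\II^c \stleq \II^p$---but by a self-contained route. The paper's Lemma~1 writes the conditional law of each index given the previous ones as a shifted geometric, observes that the failure parameters satisfy $q^{i-1-k}\le (i-1)/k$, and then invokes Veinott's theorem on conditional stochastic orderings to conclude multivariate dominance. You instead recast both strategies as level-crossing times of two monotone Markov chains (distinct-label count versus rank over $\FF_q$) and build an explicit monotone coupling from a common uniform sequence; the inequality you verify, $1-q^{j-k}\ge (k-j)/k$ for $j\ge 1$, is exactly the paper's parameter comparison in complementary form, and you actually prove it (via $j\left(q^{k-j}-1\right)\ge k-j$ for $q\ge 2$) where the paper merely asserts it---and asserts it as strict, which in fact fails for $q=2$, $k=2$ at $i=k$, harmlessly, since only the weak inequality is needed. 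Your handling of the boundary state $j=0$ is also more careful than the paper's bare statement $I_1^c=1$: you correctly identify that the dominance of advance probabilities reverses there because of the all-zero coefficient vector, and that the model's convention of always retrieving from the first neighbour is what closes this gap. The trade-off is the usual one: the paper's appeal to Veinott is shorter, while your coupling yields the stronger pathwise inequality $I_j^c\le I_j^p$ on a common probability space and needs no external theorem.
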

The proof of the above theorem is given in Appendix~\ref{app:codingbetter}.
Note, that the above result in particular implies that coding is better than uncoded strategy
for any realization in case of placement according to a spatial Poisson process.


Since less than $k$ caches can never provide the complete data, it is clear that the minimum cost that one can hope to achieve is given by the cost of contacting the nearest $k$ caches. For notation convenience, denote the minimum expected cost as
\begin{equation}
\Gmin= \EE\left[ G\left(D(1), \dots, D(k)\right) \right].
\end{equation}

Our next result provides a bound on the deviation of the cost of the coded strategy from $\Gmin$.
%
\begin{theorem} \label{th:upper}
The expected cost of the coded strategy is upper bounded as
\begin{equation}
\EE\left[ G\left(D(\II^c)\right) \right] \leq \Gmin + \Gdeff, 
\end{equation}
where $\Gdeff=\Gmax(1-(1-1/q)^k)$.
\end{theorem}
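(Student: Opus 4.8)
The plan is to compare the coded retrieval against the idealized situation in which the $k$ nearest caches already supply complete information, and to charge the discrepancy to the small probability that this idealization fails. Let $E$ denote the event that the coefficient vectors drawn for the $k$ nearest caches are linearly independent over $\FF_q$. On $E$ the client collects one new independent combination from each of the first $k$ neighbours, so $\II^c=(1,\dots,k)$ and hence $G(D(\II^c))=G(D(1),\dots,D(k))$. The key structural observation I would exploit is that $E$ is determined \emph{solely} by the coding coefficients, which are drawn independently of the cache positions; consequently $\mathbf{1}_E$ is independent of the distances $D(1),\dots,D(k)$.

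Next I would split the expectation according to $E$ and its complement,
\begin{equation*}
\EE[G(D(\II^c))] = \EE[G(D(\II^c))\mathbf{1}_E] + \EE[G(D(\II^c))\mathbf{1}_{E^c}].
\end{equation*}
For the first term I use the identity valid on $E$ together with the independence just noted to get $\EE[G(D(1),\dots,D(k))\mathbf{1}_E]=\Gmin\,\Pr(E)$. For the second term the uniform bound $0\le G\le\Gmax$ gives $\EE[G(D(\II^c))\mathbf{1}_{E^c}]\le \Gmax\,\Pr(E^c)$; note that on $E^c$ one can say nothing about which caches are contacted beyond this trivial bound. Writing $\Pr(E)=1-\Pr(E^c)$ and using $\Gmin\ge0$ (so that $\Gmax-\Gmin\le\Gmax$), these combine to
\begin{equation*}
\EE[G(D(\II^c))] \le \Gmin + (\Gmax-\Gmin)\,\Pr(E^c) \le \Gmin + \Gmax\,\Pr(E^c).
\end{equation*}

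It then remains to bound $\Pr(E^c)$, which is the only step requiring real computation. The event $E$ is that $k$ uniformly random vectors in $\FF_q^k$ form a basis; conditioning on the first $i-1$ being independent, the $i$-th vector avoids their span of size $q^{i-1}$ with probability $1-q^{i-1-k}$, so $\Pr(E)=\prod_{i=1}^{k}(1-q^{i-1-k})=\prod_{j=1}^{k}(1-q^{-j})$. Since $1-q^{-j}\ge 1-q^{-1}$ for every $j\ge1$, this yields $\Pr(E)\ge(1-1/q)^k$, and therefore $\Pr(E^c)\le 1-(1-1/q)^k$. Substituting into the previous display produces exactly $\EE[G(D(\II^c))]\le\Gmin+\Gmax(1-(1-1/q)^k)=\Gmin+\Gdeff$.

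I expect the main obstacle to be conceptual rather than computational: correctly isolating $E$ as an event about the coding coefficients alone, so that it decouples from the geometry and the factorization $\EE[G\,\mathbf{1}_E]=\Gmin\,\Pr(E)$ is legitimate in both the arbitrary-placement and the Poisson settings. Once that independence is in place, the product formula for $\Pr(E)$ and the elementary inequality $1-q^{-j}\ge 1-q^{-1}$ are routine.
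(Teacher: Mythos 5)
Your proof is correct and follows essentially the same route as the paper: condition on the event that the $k$ nearest caches form a full-rank system, use $\II^c=(1,\dots,k)$ on that event, and absorb the complementary event into the $\Gmax$ bound. The only differences are that you make explicit the independence of the rank event from the distances (which the paper leaves implicit) and you derive $P(E^c)\le 1-(1-1/q)^k$ from the product formula $\prod_{j=1}^{k}(1-q^{-j})$ rather than citing it, both of which are welcome but not a different argument.
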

\begin{IEEEproof} 
Let $A$ denote the event that the first $k$ caches provide a full rank system. Then
\begin{multline}
\EE\left[ G\left(D(\II^c)\right) \right]  = \EE\left[ G\left(D(\II^c)\right) | A \right]P(A) \\ + \EE\left[ G\left(D(\II^c)\right) | \bar A \right] P(\bar A)
\end{multline}
and the result follows from the assumption that $G$ is bounded by $\Gmax$ and $P(\bar A)\leq 1-(1-1/q)^k$, \cf~\cite{ho2006random} or~\cite{ncfundamentals}.
\end{IEEEproof}
We will illustrate the above result for a specific performance measure below. In particular, we will show that the coded strategy is close to optimal.

\subsection{Waiting time}

Next, we consider specific instances of the cost. First we consider $W$ as defined in Section~\ref{ssec:w}. Before giving the results we will provide some definitions and results on the Gamma function. Let $\Gamma(k,x)$ denote the upper incomplete Gamma function, \ie
\begin{equation}
 \Gamma(k,x) = \int_x^\infty z^{k-1}e^{-z}dz. 
\end{equation}
Furthermore, we define $\Gamma(k)=\Gamma(k,0)$ and $\gamma(k,x) = \Gamma(k) - \Gamma(k,x)$. The use of $\Gamma(k,x)$ for Poisson processes stems from the fact that for a Poisson random variable $X$ with mean $\mu$ we have
\begin{equation}
P(X<n) = e^{-\mu}\sum_{i=0}^{n-1}\frac{\mu^i}{\Gamma(i+1)} = \frac{\Gamma(n,\mu)}{\Gamma(n)}.
\end{equation}
Therefore, the probability that $D_n$, the distance to the $n$-th nearest neigbor in an HPP of density $\lambda$ is at most $\delta$ is
\begin{equation}
P\left(D_n\leq \delta\right) = 1-\frac{\Gamma(n,\lambda\pi\delta^2)}{\Gamma(n)}.
\end{equation}

We are now ready to present our first result on $W$. The proof of the next theorem is given in Appendix~\ref{app:wmain}.
\begin{theorem} \label{th:wmain}
Let the caches be distributed in the plane as HPP with density $\lambda$.
Then, the expected costs of the partitioning and coded strategies are
\begin{align*}
W^p =&\ k\left(\frac{k}{\lambda\pi}\right)^{b-1}\gamma\left(b,\frac{\dd}{k}\right)
 + k\left(\frac{d}{\lambda\pi}\right)^{b-1}\Gamma\left(1,\frac{\dd}{k}\right), \\
W^c_{\mathrm{min}} =&\ \frac{\gamma(k+b,d) + bd^{b-1}\Gamma(k+1,d)-(b-1)d^{b}\Gamma(k,d)}{(\lambda\pi)^{b-1}b\Gamma(k)},
\end{align*}
where $b=a/2+1$ and $d=\lambda\pi\dmax^2$.
\end{theorem}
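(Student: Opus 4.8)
The plan is to treat both quantities as sums of single-cache contributions and then evaluate each contribution against the known nearest-neighbour distance laws of the HPP. For both strategies the client contacts exactly $k$ caches, so by linearity of expectation it suffices to understand the distribution of each retrieval distance separately, where $b=a/2+1$ and $\dd=\lambda\pi\dmax^2$ as in the statement, and throughout I would use the substitution $u=\lambda\pi\delta^2$ together with the incomplete-Gamma identity $\int_0^{\lambda\pi\delta^2}u^{s-1}e^{-u}\,du=\gamma(s,\lambda\pi\delta^2)$.

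For the partitioning strategy I would first invoke the thinning property of the HPP: since each cache is labelled independently and uniformly from $\{1,\dots,k\}$, the caches carrying a fixed symbol $M_t$ form an independent sub-HPP of intensity $\lambda/k$, and the relevant distance $D(I^p_i)$ is just the distance to the nearest point of this thinned process. By symmetry all $k$ symbols contribute equally, so $W^p=k\,\EE[\min\{(D_1')^a,(\dmax)^a\}]$ with $D_1'$ the nearest-neighbour distance in an HPP of intensity $\lambda/k$. Splitting the expectation at $\dmax$, the part below $\dmax$ integrates against the density $2(\lambda/k)\pi\delta\,e^{-(\lambda/k)\pi\delta^2}$ and yields, after the substitution, the term $\left(k/(\lambda\pi)\right)^{b-1}\gamma(b,\dd/k)$, while the tail $(\dmax)^a P(D_1'>\dmax)=(\dmax)^a e^{-\dd/k}$ gives $\left(\dd/(\lambda\pi)\right)^{b-1}\Gamma(1,\dd/k)$ using $\Gamma(1,x)=e^{-x}$ and $(\dmax)^a=(\dd/(\lambda\pi))^{b-1}$. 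Multiplying by $k$ reproduces $W^p$.

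For the coded strategy, $W^c_{\mathrm{min}}$ is by definition $\Gmin$, the cost of contacting the $k$ nearest caches, so $W^c_{\mathrm{min}}=\sum_{i=1}^k\EE[\min\{(D_i)^a,(\dmax)^a\}]$ where $D_i$ is the distance to the $i$-th nearest point of the full HPP of intensity $\lambda$. From the distribution $P(D_i\le\delta)=1-\Gamma(i,\lambda\pi\delta^2)/\Gamma(i)$ quoted before the theorem, the density is $f_{D_i}(\delta)=2(\lambda\pi)^i\delta^{2i-1}e^{-\lambda\pi\delta^2}/\Gamma(i)$. The same split-and-substitute computation then gives the clean per-index form
\begin{equation*}
\EE\left[\min\{(D_i)^a,(\dmax)^a\}\right]=\frac{\gamma(b+i-1,\dd)}{(\lambda\pi)^{b-1}\Gamma(i)}+\frac{\dd^{\,b-1}\Gamma(i,\dd)}{(\lambda\pi)^{b-1}\Gamma(i)},
\end{equation*}
using $(\dmax)^a(\lambda\pi)^{b-1}=\dd^{\,b-1}$ for the tail term.

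The main obstacle is to collapse the resulting sum $\sum_{i=1}^k\Gamma(i)^{-1}\!\left[\gamma(b+i-1,\dd)+\dd^{\,b-1}\Gamma(i,\dd)\right]$ into the stated closed form. I would handle this by induction on $k$: writing the claimed right-hand side as $R_k$, one checks the base case $k=1$ directly and verifies that the increment $R_k-R_{k-1}$ equals the $i=k$ summand above. Using $\Gamma(k)=(k-1)\Gamma(k-1)$ to put both terms over $b\Gamma(k)$ and then applying the recurrences $\gamma(s+1,x)=s\gamma(s,x)-x^se^{-x}$ and $\Gamma(s+1,x)=s\Gamma(s,x)+x^se^{-x}$ to $\gamma(k+b,\dd)$, $\Gamma(k+1,\dd)$ and $\Gamma(k,\dd)$, the coefficients of $\gamma(k+b-1,\dd)$ and $\Gamma(k,\dd)$ reduce to $b$ and $b\dd^{\,b-1}$ respectively, and the surviving $e^{-\dd}$ terms cancel identically; this matches exactly the $i=k$ contribution, completing the induction and hence the proof. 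The delicate point is this bookkeeping of the boundary $e^{-\dd}$ terms, which is where the specific combination $\gamma(k+b,\dd)+b\dd^{\,b-1}\Gamma(k+1,\dd)-(b-1)\dd^{\,b}\Gamma(k,\dd)$ in the numerator is forced.
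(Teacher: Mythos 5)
Your proposal is correct, and it reproduces the paper's overall decomposition exactly: linearity of expectation over the $k$ contacted caches, the thinning argument giving an HPP of intensity $\lambda/k$ per label for the uncoded strategy, the order-statistics law $P(D_i\leq\delta)=1-\Gamma(i,\lambda\pi\delta^2)/\Gamma(i)$ for the coded strategy, and the split of each expectation at $\dmax$ followed by the substitution $u=\lambda\pi\delta^2$. Where you genuinely diverge is in the final step of collapsing $\sum_{i=1}^k\Gamma(i)^{-1}\bigl[\gamma(b+i-1,\dd)+\dd^{\,b-1}\Gamma(i,\dd)\bigr]$. The paper does this constructively: it exchanges sum and integral using the Poisson-CDF identity $\sum_{j=0}^{k-1}z^je^{-z}/\Gamma(j+1)=\Gamma(k,z)/\Gamma(k)$, reducing the sum to the single integral $\int_x^\infty z^c\Gamma(k,z)\,dz$, which is then evaluated in closed form by a separate recurrence/induction lemma. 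You instead guess the closed form from the statement and verify it by induction on $k$, checking that the increment $R_k-R_{k-1}$ matches the $i=k$ summand via the recurrences $\gamma(s+1,x)=s\gamma(s,x)-x^se^{-x}$ and $\Gamma(s+1,x)=s\Gamma(s,x)+x^se^{-x}$; I checked this bookkeeping and the $e^{-\dd}$ boundary terms do cancel as you claim, so the induction closes. The trade-off is that the paper's route derives the answer without knowing it in advance and yields a reusable summation identity (its Lemma~3, stated for general $c$ and also in the $x=0$ case used later for Corollary~1), whereas your telescoping verification is more elementary and self-contained but only confirms a formula you must already possess.
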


To get an idea about the nature of the above involved equations, a reader may check Figure~\ref{fig:WHPPSimTheo}
where a numerical example is presented.



From the observation that
\begin{equation}
\lim_{d\to\infty} d^c\Gamma(k,d) = 0,
\end{equation}
for any $c$,
we immediately obtain the following corollary which provides limiting expressions for $W^p$ and $W^c_{\mathrm{min}}$ for $\dmax\to\infty$.
\begin{corollary} \label{cor:w}
Let $\bar W^p = \lim_{\dmax\to\infty} W^p$ and $\bar W^c_{\mathrm{min}} = \lim_{\dmax\to\infty} W^c_{\mathrm{min}}$. We have
\begin{align}
\bar W^p =& k\left(\frac{k}{\lambda\pi}\right)^{a/2}\Gamma\left(1+\frac{a}{2}\right), \\
\bar W^c_{\mathrm{min}} &=  \frac{\Gamma(k+1+\frac{a}{2})}{(1+\frac{a}{2})(\lambda\pi)^{a/2}\Gamma(k)}.
\end{align}
\end{corollary}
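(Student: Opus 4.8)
The plan is to compute the limit $\dmax\to\infty$ directly from the closed-form expressions for $W^p$ and $W^c_{\mathrm{min}}$ furnished by Theorem~\ref{th:wmain}, treating each term separately. Since $d=\lambda\pi\dmax^2$, sending $\dmax\to\infty$ is the same as sending $d\to\infty$, and throughout we have $b-1=a/2$, so the prefactors $(k/(\lambda\pi))^{b-1}$ and $(\lambda\pi)^{-(b-1)}$ already supply the $(\lambda\pi)^{a/2}$ appearing in the claimed limits. Two elementary facts drive everything: first, the lower incomplete Gamma function satisfies $\gamma(n,x)=\Gamma(n)-\Gamma(n,x)\to\Gamma(n)$ as $x\to\infty$, because the tail $\Gamma(n,x)=\int_x^\infty z^{n-1}e^{-z}dz$ vanishes; second, as already observed just before the corollary, $d^c\Gamma(n,d)\to 0$ for every $c$. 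These cover the convergent and the vanishing contributions respectively.

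For $\bar W^p$, in the first summand $\gamma(b,d/k)\to\Gamma(b)=\Gamma(1+a/2)$, which reproduces exactly the first claimed expression. The second summand equals $k(d/(\lambda\pi))^{a/2}\Gamma(1,d/k)$, and here $\Gamma(1,d/k)=e^{-d/k}$, so after the substitution $u=d/k$ it is a constant multiple of $u^{a/2}e^{-u}$, which tends to $0$ by exponential decay. Hence only the first term survives and $\bar W^p=k(k/(\lambda\pi))^{a/2}\Gamma(1+a/2)$.

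For $\bar W^c_{\mathrm{min}}$, the leading term of the numerator is $\gamma(k+b,d)\to\Gamma(k+b)=\Gamma(k+1+a/2)$, while the remaining two terms $b\,d^{b-1}\Gamma(k+1,d)$ and $(b-1)d^{b}\Gamma(k,d)$ both vanish by the fact $d^c\Gamma(n,d)\to 0$, applied with $c=b-1$ and $c=b$ respectively. Dividing the surviving $\Gamma(k+b)$ by the constant $(\lambda\pi)^{b-1}b\Gamma(k)$ and substituting $b=1+a/2$ yields $\Gamma(k+1+a/2)/((1+a/2)(\lambda\pi)^{a/2}\Gamma(k))$, as claimed.

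There is no serious obstacle here: the corollary is a routine limit of the explicit formulas in Theorem~\ref{th:wmain}. The only point requiring a little care is bookkeeping, namely matching each vanishing polynomial-times-incomplete-Gamma term to the correct exponent $c$ and confirming that the surviving constants reassemble correctly once $b=1+a/2$ is inserted. The lone place where I would not invoke $d^c\Gamma(n,d)\to 0$ verbatim is the exponential-decay term in $W^p$, where I instead use $u^{a/2}e^{-u}\to 0$; the two are equivalent since $\Gamma(1,x)=e^{-x}$.
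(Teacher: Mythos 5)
Your proposal is correct and is essentially the paper's own argument: the paper derives the corollary directly from Theorem~\ref{th:wmain} by invoking $\lim_{d\to\infty} d^c\Gamma(k,d)=0$ to kill the tail terms, exactly as you do. Your write-up merely makes explicit the bookkeeping (including the $\gamma(n,x)\to\Gamma(n)$ step and the $\Gamma(1,x)=e^{-x}$ term) that the paper leaves to the reader.
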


Next we turn our attention to the cost benefit of coding over partitioning, defined as the ratio $\bar W^p/\bar W^c_{\mathrm{min}}$.
Clearly, it is an optimistic prediction. However, in the numerical examples section we shall demonstrate that the actual gain is
typically close to this optimistic prediction. It follows readily from Corollary~\ref{cor:w} that
\begin{equation}
\frac{\bar W^p}{\bar W^c_{\mathrm{min}}}  = \left(1+\frac{a}{2}\right) k^{1+a/2}B\left(k,1+\frac{a}{2}\right),
\end{equation}
where $B(k,1+\frac{a}{2})$ denotes the Beta function.

\begin{theorem} \label{th:benefit}
The benefit of coding over partitioning, $\bar W^p/\bar W^c_{\mathrm{min}}$, is increasing in $k$. Moreover
\begin{equation}
\lim_{k\to\infty} \frac{\bar W^p}{\bar W^c_{\mathrm{min}}} = \left(\frac{a}{2}+1\right)\Gamma\left(\frac{a}{2}+1\right).
\end{equation}
Hence
\begin{equation}
1 \leq \frac{W^p}{W^c} \leq \left(\frac{a}{2}+1\right)\Gamma\left(\frac{a}{2}+1\right),
\end{equation}
with equality on the LHS iff $k=1$.
\end{theorem}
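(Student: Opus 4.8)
The plan is to recognise the stated benefit as a pure ratio of Gamma functions, study it as a sequence in $k$, and then read off the two-sided bound. Write $b=1+a/2$, so $b>1$ because $a>0$. Using $B(k,b)=\Gamma(k)\Gamma(b)/\Gamma(k+b)$ together with $b\Gamma(b)=\Gamma(b+1)$, the expression displayed just before the theorem collapses to
\[
 r(k) \;:=\; \frac{\bar W^p}{\bar W^c_{\mathrm{min}}} \;=\; b\,k^{b}\,B(k,b) \;=\; \Gamma(b+1)\,\frac{k^{b}\,\Gamma(k)}{\Gamma(k+b)}.
\]
Evaluating at $k=1$ uses $B(1,b)=1/b$, so $r(1)=1$; this pins the left endpoint of the bound.

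For monotonicity I would avoid differentiating and instead compare consecutive terms. With $\Gamma(k+1)/\Gamma(k)=k$ and $\Gamma(k+b)/\Gamma(k+1+b)=1/(k+b)$, the prefactor $\Gamma(b+1)$ cancels and
\[
 \frac{r(k+1)}{r(k)} \;=\; \Bigl(1+\tfrac1k\Bigr)^{b}\,\frac{k}{k+b}.
\]
Thus $r(k+1)>r(k)$ is equivalent to $(1+1/k)^{b}>1+b/k$, which is Bernoulli's inequality for exponent $b>1$ and $1/k>0$, strict precisely because $b>1$. Hence $r$ is strictly increasing on the integers, giving $r(k)>r(1)=1$ for $k\ge2$ and $r(k)=1$ iff $k=1$.

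The limit comes from the standard asymptotic $\Gamma(k+b)/\Gamma(k)\sim k^{b}$ as $k\to\infty$ (immediate from Stirling), so $k^{b}\Gamma(k)/\Gamma(k+b)\to1$ and
\[
 \lim_{k\to\infty} r(k) \;=\; \Gamma(b+1) \;=\; \Bigl(\tfrac{a}{2}+1\Bigr)\Gamma\Bigl(\tfrac{a}{2}+1\Bigr) \;=:\; L.
\]
Combining the previous two steps, $(r(k))_{k\ge1}$ rises strictly from $r(1)=1$ to its supremum $L$, so $1=r(1)\le r(k)<L$ for every $k$, with equality on the left iff $k=1$.

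Finally I would transfer these to the operational ratio $W^p/W^c$. The lower bound is reinforced by Theorem~\ref{th:codingbetter}, which gives $W^c\le W^p$ and hence $W^p/W^c\ge1$ for all $k$ and $q$. For the upper bound, the coded strategy never reaches a cache nearer than the $k$-th closest, so $D(I^c_i)\ge D(i)$ componentwise and monotonicity of $W$ yields $W^c\ge W^c_{\mathrm{min}}$, whence $W^p/W^c\le W^p/W^c_{\mathrm{min}}$. The main obstacle is this last transfer: $L$ is the $\dmax\to\infty$ idealisation of the benefit, so one must still argue that truncation at a finite $\dmax$ does not push $W^p/W^c_{\mathrm{min}}$ above its limiting value $L$. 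Once that estimate is secured, the sequence analysis supplies the cap; by contrast the $k$-dependence itself—via the Gamma-ratio reduction and Bernoulli's inequality—is entirely routine.
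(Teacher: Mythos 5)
Your argument is correct, and for the monotonicity claim it takes a genuinely different route from the paper. The paper treats $k$ as a continuous variable, differentiates $b k^b B(k,b)$, and reduces the sign of the derivative to the digamma inequality $\psi(k+b)-\psi(k)\leq b/k$, which it verifies through the integral representation of $\psi$ and the pointwise bound $\frac{1-e^{-bt}}{1-e^{-t}}\leq b$. You instead compare consecutive integer values, computing $r(k+1)/r(k)=(1+1/k)^{b}\,k/(k+b)$ and invoking Bernoulli's inequality for exponent $b>1$. Since $k$ is a number of packets and hence an integer, your discrete argument fully suffices for the theorem as stated, avoids special functions entirely, and gives strictness for free; the paper's continuous argument buys monotonicity in real $k$ at the cost of the digamma machinery. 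Both proofs obtain the limit from the same Stirling-type asymptotic $\Gamma(k+b)/\Gamma(k)\sim k^{b}$, and your evaluation $r(1)=1$ matches the paper's equality condition.

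On the final ``Hence'' step you are right to single out the transfer from $\bar W^p/\bar W^c_{\mathrm{min}}$ to $W^p/W^c$ as the delicate point, and you should know that the paper's own proof is silent on it: it proves only the monotonicity and the limit and then asserts the two-sided bound on $W^p/W^c$ without further argument. Your skeleton --- $W^p/W^c\geq 1$ from Theorem~\ref{th:codingbetter}, and $W^p/W^c\leq W^p/W^c_{\mathrm{min}}$ from $D(I^c_i)\geq D(i)$ plus monotonicity of $W$ --- is the right one, and the residual question of whether truncation at finite $\dmax$ can push $W^p/W^c_{\mathrm{min}}$ above the untruncated ratio is a gap in the paper as much as in your write-up, not something you missed. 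Relative to the paper you have therefore proved at least as much, by more elementary means, and flagged the one step that both arguments leave open.
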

\begin{IEEEproof}
We start with proving that $\bar W^p/\bar W^c_{\mathrm{min}}$, is increasing in $k$.
Let $b=1+\frac{a}{2}$. We have
\begin{equation}
 \frac{\partial}{\partial k}\frac{\bar W^p}{\bar W^c_{\mathrm{min}}} = b k^b B(k,b)\left[ \frac{b}{k} + \psi(k) - \psi(k+b)\right],
\end{equation}
where $\psi(x) = \int_0^\infty \left( \frac{e^{-t}}{t} - \frac{e^{-xt}}{1-e^{-t}}\right)dt$ is the digamma function~\cite{abramowitz1974handbook}. We need to show that
\begin{equation}
 \int_0^\infty e^{-kt} \frac{1-e^{-bt}}{1-e^{-t}}dt \leq \frac{b}{k},
\end{equation}
but this follows directly from the observation that $\frac{1-e^{-bt}}{1-e^{-t}}\leq b$.

The limiting expression follows from an application of Stirlings approximation.
\end{IEEEproof}
Note, that for $a=1,2,3$, the upper bound in Theorem~\ref{th:benefit} reduces to $\frac{3\sqrt{\pi}}{4}\approx 1.3$, $2$ and $\frac{15\sqrt{\pi}}{8}\approx 3.3$ respectively.

\subsection{Cache miss probability}
Next let us consider the cache miss probability.
\begin{theorem}\label{thm:outage}
The cache miss probability of the partitioning and coded strategy are
\begin{align}
F^p &= 1 - \left( 1 - e^{-\lambda\pi r^2/k} \right)^k, \\
F^c_\mathrm{min} &= \frac{\Gamma(k,\lambda\pi r^2)}{\Gamma(k)}.
\end{align}
\end{theorem}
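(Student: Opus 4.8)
The plan is to compute the two cache miss probabilities separately, using the distributional facts about the HPP that were already established above the statement, in particular the CDF of the distance $D_n$ to the $n$-th nearest neighbor.

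First I would handle the partitioning strategy. A cache miss occurs iff for some label $t\in\{1,\dots,k\}$ there is no cache carrying $M_t$ within range $r$. The key observation is that, because each of the $k$ labels is assigned independently and uniformly, the restriction of the HPP to the disk of radius $r$ thins into $k$ \emph{independent} HPPs, one per label, each of intensity $\lambda/k$ by the marking/thinning theorem. Label $t$ is successfully retrievable iff its thinned process has at least one point in the disk, which fails with probability $e^{-(\lambda/k)\pi r^2}=e^{-\lambda\pi r^2/k}$ (the void probability of a Poisson process). Since the $k$ thinned processes are independent, the probability that \emph{all} labels are present is $\bigl(1-e^{-\lambda\pi r^2/k}\bigr)^k$, and hence
\begin{equation*}
F^p = 1 - \left( 1 - e^{-\lambda\pi r^2/k} \right)^k,
\end{equation*}
as claimed.

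For the coded strategy the argument is simpler and relies only on $\Gmax$-independent combinatorics of the geometry, not on the coding coefficients: in the idealized (minimum-cost) computation one assumes any $k$ distinct caches give a full-rank system, so complete data is retrievable within range $r$ iff there are at least $k$ caches in the disk of radius $r$. Equivalently, a miss occurs iff $D(k)>r$, i.e. the $k$-th nearest neighbor lies outside range. Using the distribution recalled just before the theorem, $P(D_k>r)=\Gamma(k,\lambda\pi r^2)/\Gamma(k)$, giving
\begin{equation*}
F^c_\mathrm{min} = \frac{\Gamma(k,\lambda\pi r^2)}{\Gamma(k)}.
\end{equation*}

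The only real subtlety—and the step I would be most careful about—is justifying the thinning claim in the partitioning case: one must verify that restricting to a bounded disk and independently marking each point with a uniform label in $\{1,\dots,k\}$ yields $k$ independent Poisson processes of intensity $\lambda/k$, and that the void events across labels are therefore independent. This is a standard consequence of the independent-marking theorem for Poisson processes, but it is where the factorization into the clean product form $(1-e^{-\lambda\pi r^2/k})^k$ comes from, so it deserves an explicit sentence. The coded case carries no such obstacle, as it reduces immediately to a single nearest-neighbor count.
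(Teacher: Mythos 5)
Your proposal is correct and follows essentially the same route as the paper: the coded case reduces to $P(D(k)>r)$ via the stated distribution of the $k$-th nearest neighbor, and the partitioning case uses the thinning of the HPP into $k$ independent processes of intensity $\lambda/k$ and the product of their void probabilities. Your extra sentence justifying the independence of the thinned label processes is a welcome elaboration of a step the paper leaves implicit, but it is not a different argument.
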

\begin{IEEEproof}
For the coded strategy we have
\begin{equation}\label{eq:Fcmin}
F^c_\mathrm{min} = P(D(k)>r) = \frac{\Gamma(k,\lambda\pi r^2)}{\Gamma(k)},
\end{equation}
since the $k$-th nearest cache needs to be within distance $r$.

For the uncoded strategy we have
\begin{align}
F^p
&= 1 - P_{\lambda/k}(D_1\leq r)^k \\
&= 1 - \left( 1 - e^{-\lambda\pi r^2/k} \right)^k, 
\end{align}
which follows from the fact that each of the parts is found in the first neighbour in a (thinned) Poisson process of intensity $\lambda/k$.
\end{IEEEproof}

We note that the following asymptotics for equations in the statement of Theorem~\ref{thm:outage} take place
$$
F^p \approx 1 - \frac{(\lambda\pi r^2)^k}{k^k},
$$
$$
F^c_\mathrm{min} \approx 1- e^{-\lambda\pi r^2} \frac{(\lambda\pi r^2)^k}{k!},
$$
for large values of $k$. This indicates that when increasing $k$, the cache miss probability increases much faster for the uncoded
strategy than for the coded strategy. We shall illustrate this phenomenon with a specific numerical example in the next section.

%
%
%
\section{Evaluation} \label{sec:numerics}

Let us evaluate our theoretical results for caches placed according to
(a) Spatial Homogeneous Poisson Process and (b) real wireless network.
In order to compare these two settings we have chosen the intensity of the
Poisson process equal to the density of base stations in the real network.
In Figure~\ref{fig:HPPrealization} one can see a realization of the
Spatial Homogeneous Poisson Process with $\lambda = 1.8324 \times 10^{-5}$.
The value is rather small because we measure distances in metres.
In Figure~\ref{fig:WHPPSimTheo} we have plotted $W^p$, $W^c_{min}$ and $W^c_{min}+W_0$
given by formulae of Theorems~\ref{th:upper}~and~\ref{th:wmain}
as functions of the number of parts in which a file is divided for $q=2^8$,
$a=2$ and $\delta_{max}=700$. On this figure we have also plotted the averaged
cost functions for uncoded and coded strategies from the simulation of the system.
We have averaged over 500 realizations of the Poisson process and 100 realizations
of the part distribution. The simulation curve for the uncoded strategy follows
very closely the theoretical result. Of course, this is no surprise if the simulations
and theoretical derivations have been done correctly. What is more interesting is
that the simulation curve for the coded strategy follows very closely the lower
bound given by $W^c_{min}$. This can be explained by the fact that when the number
of parts $k$ is not large, the event of obtaining not a full rank system after
$k$ requests to caches is very small.

\begin{figure}
\begin{center}
\includegraphics[scale=0.49]{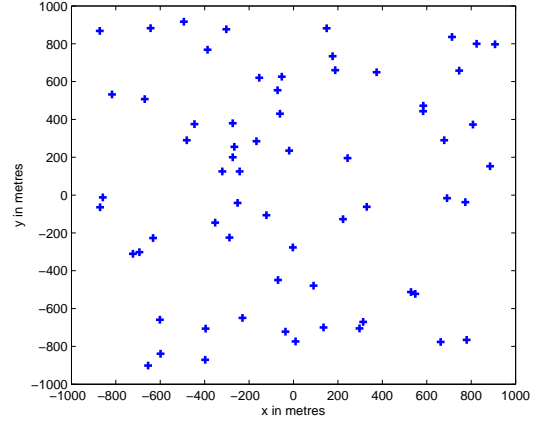}
\caption{A realization of the Spatial Homogeneous Poisson Process.}
\label{fig:HPPrealization}
\end{center}
\end{figure}

\begin{figure}
\begin{center}
\includegraphics[scale=0.49]{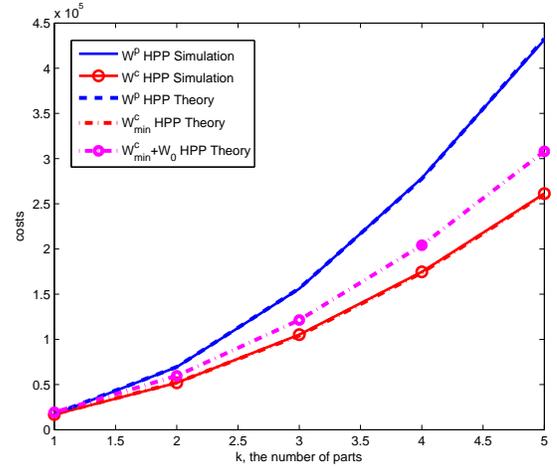}
\caption{Cost function, bounds and simulation results for Poisson process.}
\label{fig:WHPPSimTheo}
\end{center}
\end{figure}

Next we evaluate the performance of coded and uncoded strategies on the topology
of a real wireless network. We have taken the positions of 3G base stations provided
by the OpenMobileNetwork project \cite{OpenMobileNetwork}. The base stations are situated
in the area $1.95 \times 1.74 \ kms$ around the TU-Berlin campus. One can see the
positions of the base stations from the OpenMobileNetwork project in Figure~\ref{fig:realdata}.
We note that the base stations of the real network are more clustered because they are
typically situated along the roads. In Figure~\ref{fig:WRealDatavsHPP} we have plotted
the averaged simulation results for coded and uncoded strategies for the real network
topology and Poisson process with the intensity equal to the density of the base stations.
The averaging for the real network data has been done with respect to the user position
and part placement. In order to avoid boarder effect we have placed a user in a smaller
centered rectangular region with the lengths of the sides equal to halfs of the respective
sides.
We observe that the clustering actually decreases the cost and the decrease in the cost
of the uncoded strategy is larger than the decrease in the cost of the coded strategy.
A possible intuitive explanation for this phenomenon is that for uncoded strategy it is
difficult to find the last parts and clustering helps to reduce the marginal cost
of doing more trials at the end of the discovery process.

\begin{figure}
\begin{center}
\includegraphics[scale=0.49]{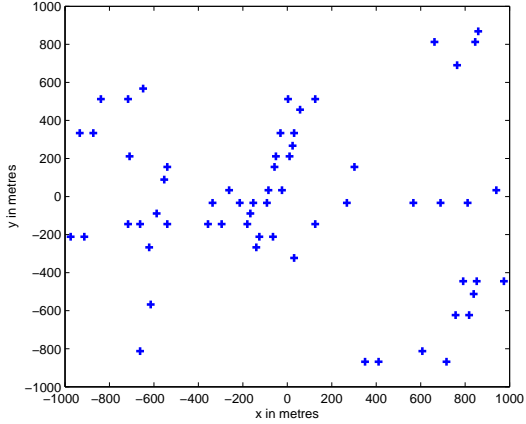}
\caption{Location of Base Stations from OpenMobileNetwork dataset.}
\label{fig:realdata}
\end{center}
\end{figure}


\begin{figure}
\begin{center}
\includegraphics[scale=0.49]{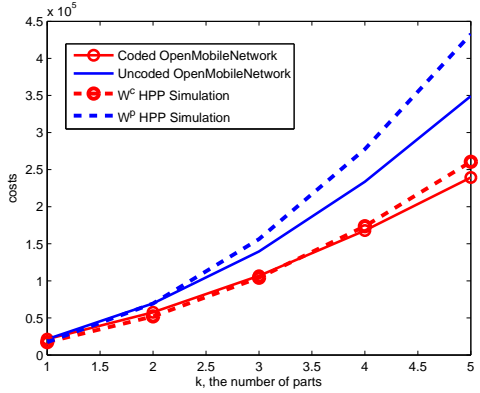}
\caption{Comparison OpenMobileNetwork versus Poisson process.}
\label{fig:WRealDatavsHPP}
\end{center}
\end{figure}

For the same $\lambda = 1.8324 \times 10^{-5}$ and $r=700$ we use the formulae from Theorem~\ref{thm:outage}
to plot the cache miss probabilities $F^p$ and $F^c_{min}$ in Figure~\ref{fig:Fpcmin}. Suppose one
wants the cache miss probability to be not larger than $10^{-2}$. Then, for the uncoded strategy
one should not divide the file into more than 5 parts, whereas for the coded strategy one can have
up to 17 parts. This highlights another benefit, flexibility, of the coded strategy for spatial
caching.

\begin{figure}
\begin{center}
\includegraphics[scale=0.49]{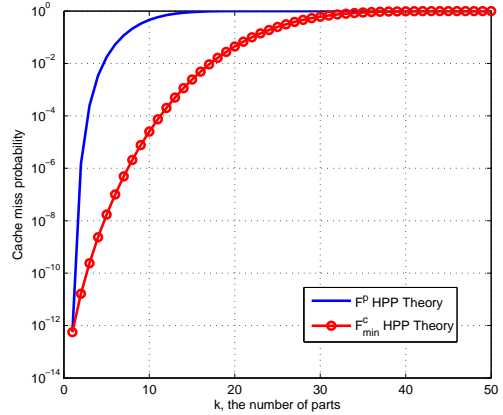}
\caption{Cache miss probability for Poisson process.}
\end{center}
\label{fig:Fpcmin}
\end{figure}

%
%
%

%
%
%
\section{Discussion} \label{sec:discussion}
In this paper we have analyzed allocation strategies for networks of caches. In particular, we have studied optimal deployment and allocation strategies for a single client that is requesting all data. In future work we will focus on the dynamic setting in which there are many clients that are requesting data over time. Hence, we obtain a spatial queueing model. One particular problem to study in this scenario is the  stability of such a system if we take into account the interference caused by multiple caches transmitting simultaneously. Another problem is on actual caching strategies if clients request only part of the data. In particular, if there are many different pieces of data with varying popularity, the challenge is to design distributed coded caching strategies that minimize overall cost, \ie maximize hit rate.

%
%
%
\section*{Acknowledgement}
Part of this work was done while J.~Goseling was visiting the MAESTRO group at INRIA Sophia Antipolis, France, in November 2011. The hospitality and support provided by INRIA are greatly acknowledged. This work was supported in part by the NWO grant 612.001.107.

\appendices

%
%
%
\section{Proof of Theorem~\ref{th:codingbetter}} \label{app:codingbetter}
Observe that Theorem~\ref{th:codingbetter} provides a stochastic comparison result. To simplify discussion in the remainder we introduce for random vectors $\ZZ=(Z_1,\dots,Z_k)$ and $\tilde\ZZ=(\tilde Z_1,\dots,\tilde Z_k)$ the notation
 $\ZZ \stleq \tilde\ZZ$,
to denote that $\EE\left[F\left(\ZZ\right)\right]\leq\EE\left[F\left(\tilde\ZZ\right)\right]$ for all bounded increasing $F$. We say that $\ZZ$ is less than $\tilde\ZZ$ in the usual stochastic ordering.

\begin{lemma} \label{lem:orderI}
$\II^c \leq_{\mathrm{st}} \II^p.$
\end{lemma}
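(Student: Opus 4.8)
The plan is to exhibit a coupling of the two index vectors under which $\II^c$ sits below $\II^p$ coordinate-wise, and then to invoke the standard characterization of the usual stochastic order (Strassen): $\II^c\stleq\II^p$ holds exactly when such a monotone coupling exists. To build the coupling I would first describe both vectors through their increment (waiting-time) processes. Scanning the caches in order of increasing distance, take the state to be the number of parts already recovered. In the uncoded strategy this is the number of distinct labels seen so far; given that $m$ labels have been collected, the next cache carries a fresh label with probability $(k-m)/k$, independently of the history and of which labels appeared. In the coded strategy the state is the dimension of the span of the combinations seen so far; given dimension $m$, a fresh uniform vector in $\FF_q^k$ falls outside the current $m$-dimensional subspace with probability $1-q^{-(k-m)}$, independently of the history and of the particular subspace. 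Hence in both cases the number of extra caches needed to raise the count from $m$ to $m+1$ is geometric, and these increments are independent across the levels $m=0,\dots,k-1$. Writing $I^p_j$ and $I^c_j$ as partial sums of independent geometrics with success parameters $p^p_m=(k-m)/k$ and $p^c_m=1-q^{-(k-m)}$ reduces the whole comparison to a level-by-level comparison of these parameters.

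The technical heart is the inequality $p^c_m\ge p^p_m$ at every level. Setting $j=k-m$, this reads $1-q^{-j}\ge j/k$, equivalently $q^{\,j}(k-j)\ge k$, for $1\le j\le k-1$. Since $q\ge 2$ it suffices to prove $2^{\,j}(k-j)\ge k$ on this range; viewing the left-hand side as a function of $j$, its derivative shows it is unimodal, so its minimum over $\{1,\dots,k-1\}$ is attained at an endpoint, and both endpoints ($j=1$ giving $2(k-1)\ge k$ and $j=k-1$ giving $2^{\,k-1}\ge k$) hold for $k\ge 2$. With $p^c_m\ge p^p_m$ in hand I would couple the processes one level at a time: a geometric with the larger success probability is stochastically smaller, so a common inverse-CDF coupling realizes the two increments $T^c_m\le T^p_m$ on one probability space. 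Performing these single-level couplings independently across $m$ preserves independence, hence the correct joint laws of $(I^p_j)$ and $(I^c_j)$, while ensuring $I^c_j=\sum_{m<j}T^c_m\le\sum_{m<j}T^p_m=I^p_j$ for all $j$ simultaneously. This is exactly the monotone coupling required, and $\II^c\stleq\II^p$ follows.

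The step I expect to require the most care is the boundary level $m=0$. There $p^p_0=1$ while $p^c_0=1-q^{-k}<1$, so at the first cache the uncoded strategy is actually ahead: a coded cache storing the all-zero combination (probability $q^{-k}$) recovers nothing. Taken literally this would even make coordinate-wise domination impossible, since $I^p_1=1$ deterministically. I would resolve this using the model convention stated in Section~\ref{sec:strategies}, that the first part is obtained from the first neighbor, i.e.\ the nearest cache always supplies the first part; under this convention $T^c_0=T^p_0=1$ and the level-$0$ comparison is trivial, so the remaining levels $m\ge 1$ (where $j=k-m$ ranges over $1,\dots,k-1$ and the inequality above applies) carry the argument. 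The only other point to check carefully is the claimed memorylessness in the coded case, namely that the escape probability depends on the history only through the current dimension; this is immediate because a uniform vector in $\FF_q^k$ lies in any fixed $m$-dimensional subspace with probability $q^{m-k}$ regardless of the subspace, which is precisely what makes the dimension process Markov with independent geometric increments.
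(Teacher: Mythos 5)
Your proof is correct and follows essentially the same route as the paper: both decompose $\II^c$ and $\II^p$ into sums of independent geometric waiting times (with failure probabilities $q^{-(k-m)}$ versus $m/k$ at level $m$) and compare the parameters level by level, the only difference being that you construct the monotone coupling explicitly where the paper invokes Veinott's conditional-increments criterion. As a bonus you actually verify the inequality $q^{-(k-m)}\le m/k$, which the paper asserts without proof, and you handle the first-cache convention $I_1^c=1$ exactly as the paper does.
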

\begin{IEEEproof}
We will make use of a result by Veinott~\cite{veinott1965optimal}, \cf~\cite[Theorem~3.3.7]{mullerstoyan} that states that the result
holds if
$I_1^c\stleq I_1^p$
and for $i=2,3,\dots,k,$
\begin{multline}
\left[ I_i^c \middle| I_1^c=x_1^c,\dots,I_{i-1}^c=x_{i-1}^c \right] \stleq \\
\left[ I_i^p \middle| I_1^p=x_1^p,\dots,I_{i-1}^p=x_{i-1}^p \right],
\end{multline}
whenever $x_j^c\leq x_j^p$ for all $j=1,\dots,i-1$.

We have $I_1^c=1$ and for $i=2,\dots,k$
\begin{equation}
 \left[ I_i^c \middle| I_1^c=x_1^c,\dots,I_{i-1}^c=x_{i-1}^c \right] = G_i + x_{i-1}^c,
\end{equation}
where $P(G_i=n)=(1-g_i)g_i^{n-1}$, $g_i=q^{i-1}/q^k$, \ie $G_i$ is geometrically distributed with parameter $g_i$. The above follows from the fact the coding coefficients are chosen independently for each cache. The probability that the vector of coefficients $(c_1,\dots,c_k)$ is linearly dependent of the $i-1$ vectors that have previously been obtained is $q^{i-1}/q^k$, where $q^k$ is the total number of possible vectors from which one is selected uniformly at random, and $q^{i-1}$ is the number of vectors that are linearly dependent.

For the partitioning strategy we have $I_1^p=1$ and
\begin{equation}
\left[ I_i^p \middle| I_1^p=x_1^p,\dots,I_{i-1}^p=x_{i-1}^p \right] = \tilde G_i + x_{i-1}^p,
\end{equation}
where $P(\tilde G_i=n)=(1-\tilde g_i)\tilde g_i^{n-1}$, $\tilde g_i=(i-1)/k$, which follows from the observation that there are $k$ different symbols that could be obtained, of which $i-1$ are not useful.

Now observe that $g_i<\tilde g_i$ for all $q$, all $k$, and all $i=2,\dots,k$. Therefore, $P(G_i\leq z)\geq P(\tilde G_i\leq z)$ for all $z$ and $G_i\stleq \tilde G_i$, \cf~\cite[Theorem~1.2.8]{mullerstoyan}. Finally, note that $G_i\stleq \tilde G_i$ implies that $G_i + x_{i-1}^c\stleq \tilde G_i + x_{i-1}^p$ whenever $x_{i-1}^c\leq x_{i-1}^p$.
\end{IEEEproof}
The proof of Theorem~\ref{th:codingbetter} follows directly from the above lemma.

%
%
%
\section{Proof of Theorem~\ref{th:wmain}} \label{app:wmain}

We first give some technical lemmas.

\begin{lemma} \label{lem:intgamma}
\begin{equation} \label{eq:lemintgamma}
\int_x^\infty z^c\Gamma(k,z)dz = \frac{\Gamma(k+c+1,x) - x^{1+c}\Gamma(k,x)}{c+1}.
\end{equation}
\end{lemma}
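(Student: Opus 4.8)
The plan is to prove the identity
\begin{equation*}
\int_x^\infty z^c\Gamma(k,z)\,dz = \frac{\Gamma(k+c+1,x) - x^{1+c}\Gamma(k,x)}{c+1}
\end{equation*}
by integration by parts. The natural choice is to differentiate $\Gamma(k,z)$ and integrate $z^c$. First I would recall that the upper incomplete Gamma function satisfies $\frac{d}{dz}\Gamma(k,z) = -z^{k-1}e^{-z}$, which is immediate from the definition $\Gamma(k,z)=\int_z^\infty t^{k-1}e^{-t}\,dt$ and the fundamental theorem of calculus. Setting $u=\Gamma(k,z)$ and $dv=z^c\,dz$, so that $du = -z^{k-1}e^{-z}\,dz$ and $v=\frac{z^{c+1}}{c+1}$, integration by parts gives
\begin{equation*}
\int_x^\infty z^c\Gamma(k,z)\,dz = \left[\frac{z^{c+1}}{c+1}\Gamma(k,z)\right]_x^\infty + \frac{1}{c+1}\int_x^\infty z^{c+1}z^{k-1}e^{-z}\,dz.
\end{equation*}

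Next I would evaluate the two resulting terms. The remaining integral simplifies directly: $\int_x^\infty z^{c+k}e^{-z}\,dz = \Gamma(k+c+1,x)$ by the very definition of the incomplete Gamma function (with parameter $k+c+1$, since the exponent $z^{c+k}=z^{(k+c+1)-1}$). This produces the first term of the claimed right-hand side. The boundary term contributes $-\frac{x^{c+1}}{c+1}\Gamma(k,x)$ at the lower limit, which matches the second term on the right-hand side, provided the upper limit vanishes.

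The one point requiring genuine care, and the main (though modest) obstacle, is justifying that the boundary term at infinity vanishes, \ie that $\lim_{z\to\infty} z^{c+1}\Gamma(k,z) = 0$. This is exactly the limit fact $\lim_{d\to\infty} d^c\Gamma(k,d)=0$ already invoked in the main text just before Corollary~\ref{cor:w}, applied here with exponent $c+1$. It holds because $\Gamma(k,z)$ decays like $z^{k-1}e^{-z}$ as $z\to\infty$, and the exponential decay dominates any polynomial factor $z^{c+1}$. I would state this decay estimate briefly to close the argument. Collecting the surviving boundary term and the evaluated integral yields
\begin{equation*}
\int_x^\infty z^c\Gamma(k,z)\,dz = \frac{\Gamma(k+c+1,x) - x^{1+c}\Gamma(k,x)}{c+1},
\end{equation*}
as desired. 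Throughout I would treat $c$ as a real parameter with $c+1\neq 0$ (the statement is used with $c=b-1$ and $b=a/2+1>1$, so $c+1=b>0$ and no convergence issues arise), which keeps all the integrals absolutely convergent at both endpoints.
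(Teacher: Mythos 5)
Your proof is correct, but it takes a different route from the paper. The paper proves the identity by induction on $k$: the base case $k=1$ uses $\Gamma(1,z)=e^{-z}$ and one integration by parts, and the inductive step repeatedly applies the recurrence $\Gamma(k,x)=(k-1)\Gamma(k-1,x)+x^{k-1}e^{-x}$ to reduce to the case $k-1$. You instead do a single integration by parts directly on $\int_x^\infty z^c\Gamma(k,z)\,dz$, differentiating $\Gamma(k,z)$ via $\frac{d}{dz}\Gamma(k,z)=-z^{k-1}e^{-z}$ and integrating $z^c$, then identifying the surviving integral as $\Gamma(k+c+1,x)$ and checking that the boundary term at infinity vanishes. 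Your argument is shorter, avoids the induction entirely, and in fact does not require $k$ to be a positive integer (the paper's induction does), which is a mild gain in generality; the paper's uses of the lemma only need integer $k$, so nothing is lost either way. You correctly isolate the one point needing care --- the vanishing of $z^{c+1}\Gamma(k,z)$ as $z\to\infty$, which follows from the asymptotic $\Gamma(k,z)\sim z^{k-1}e^{-z}$ and is the same limit fact the paper invokes before Corollary~\ref{cor:w} --- and your remark that $c+1>0$ in all applications ($c=b-1$ with $b=a/2+1>1$) settles convergence at both endpoints. No gaps.
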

\begin{IEEEproof}
For $k=1$,~\eqref{eq:lemintgamma} follows from $\Gamma(1,z)=e^{-z}$ and an integration by parts. We proceed by induction over $k$. By repeatedly using the recurrence identity
\begin{equation}
\Gamma(k,x) = (k-1)\Gamma(k-1,x) + x^{k-1}e^{-x},
\end{equation}
which is easily shown using an integration by parts, and assuming that~\eqref{eq:lemintgamma} holds for $k-1$ we have
\begin{align}
\int_x^\infty z^c & \Gamma(k,z)dz
= \int_x^\infty z^{c+k-1}e^{-z}dz \\
=&\  \int_x^\infty \left[ z^{k+c-1}e^{-z} + (k-1)z^c\Gamma(k-1,z)\right]dz \\
=&\ \frac{(k+c)\Gamma(k+c,x) - x^{c+1}(k-1)\Gamma(k-1,x)}{c+1} \\
=&\ \frac{1}{c+1}\Big\{ \Gamma(k+c+1,x) \notag \\
   &\ - x^{k+c}e^{-x}
   - x^{c+1}\left[\Gamma(k,x) - x^{k-1}e^{-x}\right] \Big\} \\
=&\ \frac{\Gamma(k+c+1,x) - x^{1+c}\Gamma(k,x)}{c+1}.
\end{align}
\end{IEEEproof}

\begin{lemma} \label{lem:sumgammaincompl}
For $k\in\mathbb{N}$, $c\geq 0$ and $x\geq 0$
\begin{align}
\sum_{i=1}^k \frac{ \Gamma(i+c) }{\Gamma(i) } &= \frac{\Gamma(k+c+1)}{(c+1)\Gamma(k)}, \\
\sum_{i=1}^k \frac{ \Gamma(i+c,x) }{\Gamma(i) } &= \frac{\Gamma(k+c+1,x)-x^{c+1}\Gamma(k,x)}{(c+1)\Gamma(k)}, \\
\sum_{i=1}^k \frac{ \gamma(i+c,x) }{\Gamma(i) } &= \frac{\gamma(k+c+1,x)+x^{c+1}\Gamma(k,x)}{(c+1)\Gamma(k)},
\end{align}
\end{lemma}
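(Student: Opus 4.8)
The plan is to establish the three identities in a particular order, deriving the first and the third as consequences of the middle one. I would prove the incomplete-Gamma identity (the second line) directly, then recover the complete-Gamma identity (the first line) by letting $x\to 0$, and finally obtain the third by subtracting the second from the first.

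For the middle identity the key step is to pass to an integral representation. Writing $\Gamma(i+c,x)=\int_x^\infty z^{i+c-1}e^{-z}\,dz$ and interchanging the finite sum with the integral gives
\begin{equation*}
\sum_{i=1}^k \frac{\Gamma(i+c,x)}{\Gamma(i)} = \int_x^\infty z^c e^{-z}\sum_{i=1}^k \frac{z^{i-1}}{\Gamma(i)}\,dz.
\end{equation*}
The inner sum is a partial exponential series, $\sum_{i=1}^k z^{i-1}/\Gamma(i)=\sum_{j=0}^{k-1}z^j/j!$, which by the Poisson identity recorded in Section~\ref{sec:results} equals $e^z\,\Gamma(k,z)/\Gamma(k)$. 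Substituting this collapses the right-hand side to $\frac{1}{\Gamma(k)}\int_x^\infty z^c\Gamma(k,z)\,dz$, and applying Lemma~\ref{lem:intgamma} evaluates the integral and yields exactly the claimed expression $\bigl(\Gamma(k+c+1,x)-x^{c+1}\Gamma(k,x)\bigr)/\bigl((c+1)\Gamma(k)\bigr)$.

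With the middle identity in hand, the first follows by taking $x\to 0$: since $\Gamma(i+c,0)=\Gamma(i+c)$ and $x^{c+1}\Gamma(k,x)\to 0$ for $c\geq 0$, the boundary term drops out and $\Gamma(k+c+1,0)=\Gamma(k+c+1)$. The third identity then follows immediately by subtraction, using $\gamma(i+c,x)=\Gamma(i+c)-\Gamma(i+c,x)$ termwise together with $\gamma(k+c+1,x)=\Gamma(k+c+1)-\Gamma(k+c+1,x)$; the boundary contributions combine with the correct sign so that the $-x^{c+1}\Gamma(k,x)$ in the numerator of the second identity becomes $+x^{c+1}\Gamma(k,x)$ in the third.

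The only genuine insight is recognizing that the sum of ratios of Gamma functions is disguised as the integral of a single incomplete Gamma function once the partial exponential series is summed; after that, everything reduces to Lemma~\ref{lem:intgamma} and routine bookkeeping. An alternative, entirely self-contained route is induction on $k$, using the recurrence $\Gamma(k,x)=(k-1)\Gamma(k-1,x)+x^{k-1}e^{-x}$ (and its limit $\Gamma(k)=(k-1)\Gamma(k-1)$) to verify each telescoping step; I would regard this as a fallback, since it merely duplicates the algebra already carried out in the proof of Lemma~\ref{lem:intgamma}. I expect no serious obstacle: the sum–integral interchange is trivial for a finite sum, and convergence of $\int_x^\infty z^c\Gamma(k,z)\,dz$ is guaranteed by the $z^{k+c-1}e^{-z}$ decay of the integrand.
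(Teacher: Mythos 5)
Your proof is correct and follows essentially the same route as the paper: write $\Gamma(i+c,x)$ as an integral, swap the finite sum with the integral, recognize the partial exponential series as $e^{z}\Gamma(k,z)/\Gamma(k)$, and invoke Lemma~\ref{lem:intgamma}, with the first identity obtained at $x=0$ and the third by subtraction. The only difference is that the paper leaves the third identity implicit, whereas you spell out the (correct) sign bookkeeping.
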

\begin{IEEEproof}
We have
\begin{align}
\sum_{i=1}^k \frac{ \Gamma(i+c,x) }{\Gamma(i) }
=&\ \int_x^\infty \sum_{i=1}^k \frac{z^{c+i-1}}{\Gamma(i)}e^{-z}dz \\
=&\ \int_x^\infty z^c \sum_{j=0}^{k-1} \frac{z^{j}}{\Gamma(j+1)}e^{-z}dz \\
=&\ \int_x^\infty z^c \frac{\Gamma(k,z)}{\Gamma(k)}dz \\
=&\ \frac{\Gamma(k+c+1,x) - x^{c+1}\Gamma(k,x)}{(c+1)\Gamma(k)},
\end{align}
where the last equality follows from Lemma~\ref{lem:intgamma}. For $x=0$ this reduces to
\begin{equation}
 \sum_{i=1}^k \frac{ \Gamma(i+c) }{\Gamma(i) } = \frac{\Gamma(k+c+1)}{(c+1)\Gamma(k)}.
\end{equation}
\end{IEEEproof}

\begin{lemma} \label{lem:w}
\begin{equation}
\int_0^u x^a d\left(1-\frac{\Gamma(i,bx^2)}{\Gamma(i)}\right) = \frac{\gamma\left(\frac{a}{2}+i,bu^2\right)}{\Gamma(i)b^{a/2}}.
\end{equation}
\end{lemma}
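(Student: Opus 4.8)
The plan is to treat the Riemann–Stieltjes integral on the left as an ordinary integral by differentiating the integrator. Writing $y=bx^2$ and using $\frac{d}{dy}\Gamma(i,y)=-y^{i-1}e^{-y}$ together with the chain rule, I would first show that the cumulative distribution function $1-\Gamma(i,bx^2)/\Gamma(i)$ has density
\begin{equation}
 d\left(1-\frac{\Gamma(i,bx^2)}{\Gamma(i)}\right) = \frac{2b^i\,x^{2i-1}e^{-bx^2}}{\Gamma(i)}\,dx.
\end{equation}
This is exactly the density of the distance to the $i$-th nearest neighbour in an HPP, so the identity has a clean probabilistic reading, but for the proof only the elementary differentiation is needed.

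Substituting this density, the left-hand side becomes
\begin{equation}
 \frac{2b^i}{\Gamma(i)}\int_0^u x^{a+2i-1}e^{-bx^2}\,dx.
\end{equation}
The second step is the change of variables $z=bx^2$, under which $2bx\,dx=dz$ and $x^{a+2i-2}=(z/b)^{a/2+i-1}$. Carrying this through turns the integrand into $(z/b)^{a/2+i-1}e^{-z}\,dz/(2b)$, and collecting the resulting powers of $b$ (namely $b^{i}\cdot b^{-1}\cdot b^{-(a/2+i-1)}=b^{-a/2}$) yields
\begin{equation}
 \frac{b^{-a/2}}{\Gamma(i)}\int_0^{bu^2} z^{a/2+i-1}e^{-z}\,dz.
\end{equation}

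Finally I would recognise the remaining integral as the lower incomplete Gamma function with parameter $a/2+i$, i.e.\ $\gamma(a/2+i,bu^2)$, via the definition $\gamma(k,x)=\int_0^x z^{k-1}e^{-z}dz$ recorded earlier in the paper. This gives the claimed right-hand side $\gamma(a/2+i,bu^2)/(\Gamma(i)b^{a/2})$. There is no genuine obstacle here: the only place requiring care is the bookkeeping of the exponents of $b$ and $x$ across the substitution, and verifying that the limits transform correctly ($x=0\mapsto z=0$ and $x=u\mapsto z=bu^2$) so that the \emph{lower} incomplete Gamma function, rather than the upper one, appears.
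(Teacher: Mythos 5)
Your proof is correct and is essentially the paper's own argument carried out in full: the paper simply asserts that the identity ``follows directly from the definitions,'' and your computation (differentiating the integrator, substituting $z=bx^2$, and tracking the powers of $b$) is exactly the verification being alluded to. The bookkeeping checks out, so no issues.
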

\begin{proof}
The result follows directly from the definitions of $\Gamma(i,bx^2)$ and $\gamma(a/2+i,bu^2)$.
\end{proof}

Before giving the proofs for the two strategies, recall that $b=1+a/2$ and $d=\lambda\pi\dmax^2$. This leads to $\dmax^a=(d/\lambda/\pi)^{b-1}$ and $P(D(i)>\dmax)=\Gamma(i,\dd)/\Gamma(i)$.

\subsection{Uncoded strategy}
The first thing to note in the uncoded strategy is that for each of the labels $t\in\{1,\dots,k\}$ all caches with label $t$ form a homogeneous spatial Poisson process  with intensity $\lambda/k$. This follows from the fact that these caches form a thinned Poisson process~\cite{baccelli1}.

Now, for the uncoded strategy we have
\begin{align}
W^p
=& \sum_{i=1}^k \int_0^{\infty} \max\left\{\left(\delta_i\right)^a,\dmax^a\right\} d P(D(X_i^p)\leq\delta_i) \notag \\
=& k \int_0^{\dmax} \delta^a d\left( 1 - \frac{\Gamma(1,\frac{\lambda}{k}\pi\delta^2)}{\Gamma(1)} \right)
 + k\dmax^a\Gamma\left(1,\frac{d}{k}\right) \notag \\
=& k\left(\frac{k}{\lambda\pi}\right)^{a/2}\gamma\left(1+\frac{a}{2},\frac{\dd}{k}\right)
 + k\left(\frac{d}{\lambda\pi}\right)^{a/2}\Gamma\left(1,\frac{\dd}{k}\right), \notag
\end{align}
where we have used Lemma~\ref{lem:w}.

\subsection{Random linear coding strategy}
For the coded strategy we start with
\begin{equation}
W^c_{\mathrm{min}} = \sum_{i=1}^k \int_0^{\dmax} \delta^a d P(D(i)\leq\delta) + \dmax^a \sum_{i=1}^k \frac{\Gamma\left(i,\dd\right)}{\Gamma(i)},
\end{equation}
where we continue with analyzing each term seperately. First,
\begin{equation} \label{eq:wtemp1}
\dmax^a \sum_{i=1}^k \frac{\Gamma\left(i,\dd\right)}{\Gamma(i)} = \left(\frac{d}{\lambda\pi}\right)^{a/2} \frac{\Gamma(k+1,\dd)-\dd\Gamma(k,\dd)}{\Gamma(k)},
\end{equation}
by Lemma~\ref{lem:sumgammaincompl}. Next,
\begin{equation}
\int_0^{\dmax} \delta^a d P(D(i)\leq\delta) = \frac{\gamma\left(\frac{a}{2}+i,\dd\right)}{\Gamma(i)(\lambda\pi)^{a/2}},
\end{equation}
by Lemma~\ref{lem:w} and
\begin{equation} \label{eq:wtemp2}
\sum_{i=1}^k \frac{\gamma\left(\frac{a}{2}+i,\dd\right)}{\Gamma(i)(\lambda\pi)^{a/2}}
= \frac{\gamma(k+1+\frac{a}{2},\dd)+d^{a/2+1}\Gamma(k,\dd)}{(\lambda\pi)^{a/2}(\frac{a}{2}+1)\Gamma(k)},
\end{equation}
by Lemma~\ref{lem:sumgammaincompl}.
The result now follows from~\eqref{eq:wtemp1} and~\eqref{eq:wtemp2}.

%
%
%
\bibliographystyle{IEEEtran}
\bibliography{IEEEabrv,cache}

\end{document}